\newtheorem{thm}{Theorem}[section]
\newtheorem{lem}[thm]{Lemma}
\newtheorem{example}{Example}[section]
\theoremstyle{definition}
\newtheorem{theorem}[thm]{Theorem}
\newcommand{\bra}[1]{\mbox{$\langle #1 |$}}
\newcommand{\ket}[1]{\mbox{$| #1 \rangle$}} 
\newcommand{\tr}{\mathbf{Tr}}
\begin{document}
\date{}
\title{\bf  Monogamy Relations for the Generalized W Class}
\author{Xian Shi$^{1,2,3}$\\
	{\footnotesize $^1$Institute of Mathematics, Academy of Mathematics and Systems Science,}\\ {\footnotesize Chinese Academy of Sciences, Beijing 100190, China}\\
	{\footnotesize $^2$	University of Chinese Academy of Sciences, Beijing 100049, China,}\\
	{\footnotesize $^3$	UTS-AMSS Joint Research Laboratory for Quantum Computation and Quantum Information Processing,}\\
	{\footnotesize Academy of Mathematics and Systems Science, Chinese Academy of Sciences, Beijing 100190, China}}
\maketitle	
\begin{abstract}
\indent Here we investigate the monogamy relations for the generalized W class. Monogamy inequalities for the generalized W class in terms of the $\beta$-th ($\beta\in (0,2)$) power of the concurrence, the concurrence of assistance and the negativity are presented. At last, under some restricted conditions of the generalized W class, we present stronger monogamy inequalities.
\end{abstract}

\section{Introduction}
\indent Monogamy of entanglement is an interesting property that characterizes the distribution
of entanglement, it presents that entanglement cannot be shareable arbitrarily among many
parties, which is different from classical correlations \cite{T}. If party A has strong correlation
with party B such that $|{\psi}\rangle_{AB}=\frac{|{00}\rangle+|{11}\rangle}{\sqrt{2}}$, then the correlations between A and B
cannot be shared by party C, that is, $\rho_{ABC}=\rho_{AB}\otimes\rho_C.$ This property has been
applied on many tasks in quantum information, for example, it can be applied on the
proof of the security of quantum cryptography \cite{JM}.\\
\indent Mathematically, for a tripartite system A, B and C, the general monogamy in terms of
an entanglement measure $\mathcal{E}$ implies that the entanglement between A and BC satisfies
\begin{align}
\mathcal{E}_{A|BC}\ge \mathcal{E}_{A|B}+\mathcal{E}_{A|C}.
\end{align}
This relation was first proved for qubit systems with respect to the squared concurrence
\cite{CKW,OV}. Moreover, it has been shown that the monogamy relation is valid for the $\alpha$-th power of concurrence for the qubit systems. Negativity is another useful entanglement measure. It has been showed that the monogamy relations is valid for the three-qubit systems in terms of the negativity \cite{OF,HG}. Similarly, this relation is also generalized to the $\alpha$-th power of the negativity for n-qubit systems \cite{LL}. Recently, tighter monogamy inequalities were presented for qubit systems \cite{JF,ZF}. As a dual of concurrence, concurrence of assistance (CoA)is also meaningful, as the first polygamy inequality is shown to be valid for n-qubit pure states in terms CoA \cite{GBS}. Recently, Luo present an index, monogamy power, to depict the monogamy relation in terms of an entanglement measure $E$ \cite{L}.\\
\indent However, the above relations are not valid for higher dimensional systems , there exists counterexamples for states in the systems $3\otimes 3\otimes 3$ \cite{O} and $3\otimes 2\otimes 2$ \cite{KDS}. Although the monogamy inequality is invalid for higher dimensional systems, there exists a class of n-qudit pure states, the generalized W class, satisfy the monogamy inequalities in terms of the squared concurrence \cite{JS} and the squared convex roof extended negativity (CREN) \cite{KDS,JK}, moreover, this class saturates the inequality.\\
\indent This article is organized as follows, first we review some preliminary knowledge needed. Then we present the monogamy inequality in terms of the $\beta$-th power of CoA and CREN for the tripartite generalized W class when $\beta>0$, under some restricted conditions we make, we present a monogamy inequality in terms of the $\beta$-th power of CoA and CREN for multipartite generalized W class when $\beta>0.$

\section{Preliminaries}
For a bipartite pure state $|\psi\rangle_{AB}$,
the concurrence is given by 
\begin{equation}\label{CON}
C(|\psi\rangle_{AB})=\sqrt{2[1-Tr(\rho^2_A)]},
\end{equation}
where $\rho_A$ is the reduced density matrix by tracing over the subsystem $B$,
$\rho_{A}=Tr_{B}(|\psi\rangle_{AB}\langle\psi|)$.
The concurrence is extended to mixed states $\rho$
 by the convex roof construction
\begin{align} 
C(\rho_{AB})=\min_{\{p_i,|\psi_i\rangle\}} \sum_i p_i C(|\psi_i\rangle).
\end{align}
where the minimization takes over all the decompositions $\{p_i,|\psi_i\rangle\}$ of $\rho=\sum_{i}p_{i}|\psi _{i}\rangle \langle \psi _{i}|$ with $p_{i}\geq 0$, $\sum_{i}p_{i}=1$.
For an n-qubit quantum states, when $\alpha\ge 2,$ the concurrence satisfies \cite{ZNF}
\begin{align} \label{a1}
C^{\alpha}_{A|B_1B_2...B_{n-1}}\geq C^{\alpha}_{AB_1}+...+C^{\alpha}_{AB_{n-1}},
\end{align}
where we assume $C_{AB_i}$, $i=1,2...,n-1$, is the
concurrence of the mixed states $\rho_{AB_i}=Tr_{B_1B_2...B_{i-1}B_{i+1}...B_{n-1}}(\rho)$.
If $C_{AB_i}\not=0$, $i=1,...,n-1$, the concurrence satisfies
\begin{equation}\label{a2}
C^{\alpha}_{A|B_{1}...B_{n-1}}
\le C^{\alpha}_{AB_{1}}+...+C^{\alpha}_{AB_{n-1}},
\end{equation}
for $\alpha\leq0$, where we assume $C_{AB_i}$, $i=1,2...,n-1$, is the
concurrence of the mixed states $\rho_{AB_i}=Tr_{B_1B_2...B_{i-1}B_{i+1}...B_{n-1}}(\rho)$. Furthermore, in \cite{JF, ZF} tighter monogamy inequalities than (\ref{a1}) are derived for the $\alpha$th $(\alpha\geq2)$ power of concurrence.
\\
\indent Dual to the concurrence, for the mixed state $\rho$, the CoA $C^a$ is defined as 
\begin{align}
C^a(\rho_{AB})=\max_{\{p_i,|\psi_i\rangle\}} \sum_i p_i C(|\psi_i\rangle).
\end{align}
where the maximum takes over all the decompositions $\{p_i,|\psi_i\rangle\}$ of $\rho=\sum_{i}p_{i}|\psi _{i}\rangle \langle \psi _{i}|$ with $p_{i}\geq 0$, $\sum_{i}p_{i}=1$.
In \cite{GBS}, the author showed that for an n-qubit pure state $|{\psi}\rangle_{A|B_1B_2...B_{n-1}}$, there exists such polygamy inequalities
\begin{align}
(C^a(|{\psi}_{A|B_1B_2...B_{n-1}}))^{2}\leq (C^a(\rho_{AB_1}))^{2}+...+(C^a(\rho_{AB_{n-1}}))^{2},
\end{align}
where we assume $C_{AB_i}$, $i=1,2...,n-1$, is the
concurrence of the mixed states $\rho_{AB_i}=Tr_{B_1B_2...B_{i-1}B_{i+1}...B_{n-1}}(\rho)$.\\

At last, let us recall the definition of the generalized W class states \ket{W} \cite{JS},
\begin{align}
	\ket{\psi}=\sum_{i=1}^da_{1i}\ket{i00\cdots 0}+a_{2i}\ket{0i0\cdots 0}+\cdots+a_{ni}\ket{000\cdots i},
\end{align}here we assume $\sum_{ij}|a_{ij}|^2=1.$
There the authors present the following lemma \cite{JS}.
\begin{lem}
	For any n-qudit generalized W-class state $\ket{\psi}_{AB_{1}\cdots B_{n-1},}$ in (8) and a partition $P=\{P_1,\cdots,P_m\}$ for the set of subsystems $S=\{A, B_{1}, \cdots, B_{n-1}\},$
	\begin{align}
	C^2_{P_1\cdots \overline{P_s}\cdots P_m}=\sum_{k\ne s}C^2_{P_sP_k}=\sum_{k\ne s}(C^a_{P_sP_k})^2,
	\end{align}
	and \begin{align}
	C_{P_sP_k}=C^a_{P_sP_k},
	\end{align}
	for all $k\ne s$.
\end{lem}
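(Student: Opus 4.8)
The plan is to reduce every quantity in the statement to a single scalar per block, namely the total excitation weight $|\alpha_s|^2:=\sum_{t\in P_s}\sum_i|a_{ti}|^2$ carried by the parties inside $P_s$. First I would fix the cut of $P_s$ against the remaining blocks and split the $W$-class vector (8) as
\[
\ket{\psi}=\ket{\phi_{P_s}}\otimes\ket{0\cdots0}_{\overline{P_s}}+\ket{0\cdots0}_{P_s}\otimes\ket{\chi},
\]
where $\ket{\phi_{P_s}}=\sum_{t\in P_s}\sum_i a_{ti}\ket{t,i}_{P_s}$ gathers the excitations living in $P_s$ and $\ket{\chi}$ gathers the rest; both are orthogonal to their local vacua. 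Tracing out the complement then kills all cross terms by these orthogonalities, leaving $\rho_{P_s}=\ket{\phi_{P_s}}\bra{\phi_{P_s}}+(1-|\alpha_s|^2)\ket{0\cdots0}\bra{0\cdots0}_{P_s}$, whose only eigenvalues are $|\alpha_s|^2$ and $1-|\alpha_s|^2$. Feeding this into (\ref{CON}) gives $C^2_{P_1\cdots\overline{P_s}\cdots P_m}=4|\alpha_s|^2(1-|\alpha_s|^2)$.

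Next I would compute the two-block marginal $\rho_{P_sP_k}$ by tracing out all other blocks. The same orthogonality argument collapses it to
\[
\rho_{P_sP_k}=\ket{\Phi}\bra{\Phi}+|\gamma|^2\ket{00}\bra{00},\qquad \ket{\Phi}=\ket{\phi_{P_s}}\ket{0}_{P_k}+\ket{0}_{P_s}\ket{\phi_{P_k}},
\]
with $|\gamma|^2$ the pooled excitation weight of all remaining blocks. The key structural point is that, after normalizing the excitation directions $\ket{\tilde\phi_{P_s}}=\ket{\phi_{P_s}}/|\alpha_s|$ and $\ket{\tilde\phi_{P_k}}=\ket{\phi_{P_k}}/|\alpha_k|$, the whole operator is supported on the effective two–qubit space spanned by $\{\ket{0}_{P_s},\ket{\tilde\phi_{P_s}}\}\otimes\{\ket{0}_{P_k},\ket{\tilde\phi_{P_k}}\}$; in that encoding $\ket{\Phi}=|\alpha_s|\ket{10}+|\alpha_k|\ket{01}$ and $\ket{00}$ is the product vacuum. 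This turns a genuinely high-dimensional problem into an ordinary two-qubit concurrence computation.

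Writing $\rho_{P_sP_k}$ as a $4\times4$ matrix, I would apply the Wootters spin-flip $\tilde\rho=(\sigma_y\otimes\sigma_y)\rho^*(\sigma_y\otimes\sigma_y)$. The admixed vacuum contributes a $\ket{00}\bra{00}$ to $\rho$ but a $\ket{11}\bra{11}$ to $\tilde\rho$, and since $\rho$ has no support on $\ket{11}$ while $\tilde\rho$ has none on $\ket{00}$, the product term never meets itself in $\rho\tilde\rho$. Consequently $\rho\tilde\rho$ carries a single nonzero eigenvalue $4|\alpha_s|^2|\alpha_k|^2$, so the sorted square roots are $\lambda_1=2|\alpha_s||\alpha_k|$ and $\lambda_2=\lambda_3=\lambda_4=0$. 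Hence $C_{P_sP_k}=\max\{0,\lambda_1-\lambda_2-\lambda_3-\lambda_4\}=2|\alpha_s||\alpha_k|$, while the concurrence of assistance $C^a_{P_sP_k}=\sum_i\lambda_i=2|\alpha_s||\alpha_k|$ as well; the coincidence $C_{P_sP_k}=C^a_{P_sP_k}$ is forced precisely by there being only one nonzero $\lambda_i$, which settles (10).

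It then remains to assemble the identity. Summing the pairwise squares and using that the block weights partition the total norm, $\sum_k|\alpha_k|^2=\sum_{t,i}|a_{ti}|^2=1$, I obtain
\[
\sum_{k\ne s}C^2_{P_sP_k}=4|\alpha_s|^2\sum_{k\ne s}|\alpha_k|^2=4|\alpha_s|^2(1-|\alpha_s|^2),
\]
which matches the bipartite value from the first step, and the same number equals $\sum_{k\ne s}(C^a_{P_sP_k})^2$ by the previous paragraph, giving (9). I expect the main obstacle to be the second step: one must argue carefully that the orthogonality of each block's excitation vector to its local vacuum is exactly what (i) diagonalizes the single-block marginal, (ii) confines $\rho_{P_sP_k}$ to an effective two-qubit subspace, and (iii) removes the admixed product term $\ket{00}\bra{00}$ from the spin-flip product. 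Once this is in place, the $\beta$-free equalities (9)--(10), with concurrence and CoA coinciding, follow from a routine two-qubit calculation.
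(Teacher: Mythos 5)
Your argument is essentially correct, but first a structural remark: the paper itself contains no proof of this lemma --- it is imported verbatim from Kim and Sanders \cite{JS} --- so your proposal is best compared against that reference rather than against anything in the present text. Your reconstruction is sound: the splitting of $\ket{\psi}$ into an excitation inside $P_s$ tensored with the vacuum outside, plus the vacuum on $P_s$ tensored with the remaining excitations, is exactly the right structural observation; the resulting rank-two marginals $\rho_{P_s}=\ket{\phi_{P_s}}\bra{\phi_{P_s}}+(1-|\alpha_s|^2)\ket{0\cdots0}\bra{0\cdots0}$ and $\rho_{P_sP_k}=\ket{\Phi}\bra{\Phi}+|\gamma|^2\ket{00}\bra{00}$ are correct, as are the values $C^2_{P_1\cdots \overline{P_s}\cdots P_m}=4|\alpha_s|^2(1-|\alpha_s|^2)$ and $\lambda_1=2|\alpha_s||\alpha_k|$, and the final summation over $\sum_{k\neq s}|\alpha_k|^2=1-|\alpha_s|^2$ closes equality (9). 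Two steps deserve to be made explicit, though neither is a fatal gap. First, to justify evaluating both the convex-roof minimum and the assistance maximum inside the effective $2\otimes 2$ subspace, you need the Hughston--Jozsa--Wootters fact that every pure state in every decomposition of $\rho_{P_sP_k}$ lies in its range, which is spanned by $\ket{\Phi}$ and $\ket{00}$; ``supported on'' alone does not control the decompositions. Second, your identity $C^a=\sum_i\lambda_i$ for two-qubit states is the Laustsen--Verstraete--van Enk formula and should be cited as such --- it is the load-bearing ingredient that turns ``only one nonzero $\lambda_i$'' into $C=C^a$. It is worth knowing that the original Kim--Sanders proof avoids both Wootters' formula and the CoA formula by a more elementary route: any normalized state in the range has the form $\alpha\ket{\hat{\Phi}}+\beta\ket{00}$ with $\ket{\hat{\Phi}}=\ket{\Phi}/\|\ket{\Phi}\|$, and its concurrence $2|\alpha|^2|\alpha_s||\alpha_k|/(|\alpha_s|^2+|\alpha_k|^2)$ is linear in the weight $|\alpha|^2$; since $\sum_i p_i|\alpha_i|^2=\bra{\hat{\Phi}}\rho_{P_sP_k}\ket{\hat{\Phi}}$ is decomposition-independent, every decomposition has the same average concurrence $2|\alpha_s||\alpha_k|$, so the minimum and maximum coincide and (10) follows in one stroke. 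Your route buys a mechanical computation at the price of two imported theorems; the Kim--Sanders route is self-contained.
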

\section{Monogamy Inequalities for CoA  }
\indent First we will present a lemma, this lemma is useful for the results below.
\begin{lem}
For real numbers $x\in[0,1]$  and $t\geq1$, we have 
\begin{align}\label{lem}
(1+t)^x\geq1+(2^x-1)t^x.
\end{align}
\end{lem}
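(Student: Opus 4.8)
The plan is to prove the inequality $(1+t)^x \geq 1 + (2^x - 1)t^x$ for $x \in [0,1]$ and $t \geq 1$ by fixing $x$ and studying the function as a single variable in $t$. I would define
\begin{align}
f(t) = (1+t)^x - (2^x - 1)t^x - 1,
\end{align}
and aim to show $f(t) \geq 0$ for all $t \geq 1$. The natural strategy is to check the boundary value $t = 1$ and then control the sign of $f$ on $[1,\infty)$ via its derivative.

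First I would evaluate $f$ at the endpoint $t = 1$: here $(1+1)^x = 2^x$ and $(2^x - 1)\cdot 1^x = 2^x - 1$, so $f(1) = 2^x - (2^x - 1) - 1 = 0$. Thus the inequality holds with equality at $t = 1$, and it suffices to prove that $f$ is nondecreasing on $[1,\infty)$, i.e. that $f'(t) \geq 0$ there. Differentiating gives
\begin{align}
f'(t) = x(1+t)^{x-1} - x(2^x - 1)t^{x-1}.
\end{align}
Since $x \geq 0$, the sign of $f'(t)$ matches the sign of $(1+t)^{x-1} - (2^x-1)t^{x-1}$, so I would show
\begin{align}
(1+t)^{x-1} \geq (2^x - 1)\, t^{x-1}
\end{align}
for $t \geq 1$.

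To handle this last inequality I would divide through by $t^{x-1}$ (positive) and reduce to showing $\bigl(1 + \tfrac{1}{t}\bigr)^{x-1} \geq 2^x - 1$. Set $u = 1/t \in (0,1]$. Because $x - 1 \leq 0$, the function $u \mapsto (1+u)^{x-1}$ is decreasing in $u$, so its minimum over $u \in (0,1]$ is attained at $u = 1$, giving the value $2^{x-1}$. Hence it remains to verify the purely numerical inequality $2^{x-1} \geq 2^x - 1$ for $x \in [0,1]$, equivalently $1 \geq 2^x - 2^{x-1} = 2^{x-1}$, which holds since $x - 1 \leq 0$ forces $2^{x-1} \leq 1$. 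This chain establishes $f'(t) \geq 0$ and completes the argument.

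The main obstacle I anticipate is tracking the sign reversals carefully: the exponent $x - 1$ is nonpositive, so monotonicity directions flip relative to the naive $x > 0$ case, and one must be attentive that $2^x - 1 \geq 0$ (true for $x \geq 0$) so that dividing and comparing preserves the inequality direction. An alternative route avoiding derivatives would be to set $s = t/(1+t) \in [\tfrac12, 1)$ and rewrite the claim as a statement about the concave function $g(s) = s^x$ via its chord from $0$ to $1$, but the single-variable monotonicity argument above is the cleanest and least error-prone, so that is the one I would carry out in full.
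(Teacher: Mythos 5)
Your proof is correct and takes essentially the same route as the paper: both arguments fix $x$, reduce the claim to monotonicity of a single-variable function of $t$ on $[1,\infty)$, and anchor the inequality at the equality case $t=1$. The only difference is cosmetic --- the paper shows the ratio $g_x(t)=\frac{(1+t)^x-1}{t^x}$ is increasing (where the derivative sign $g_x'(t)=xt^{-(x+1)}\left[1-(1+t)^{x-1}\right]\ge 0$ is immediate from $x\le 1$), whereas you show the difference $f(t)=(1+t)^x-(2^x-1)t^x-1$ is nondecreasing, which costs you the extra (correct) numerical check $2^{x-1}\ge 2^x-1$.
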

\begin{proof}
The inequality (\ref{lem}) can be seen as a question to find the biggest value of the function $g_{x}(t)=\frac{(1+t)^x-1}{t^x}$ about t.
Then we have when $t\ge 1,$ $\frac{d g_{x}(t)}{d t}=xt^{-(x+1)}[1-(1+t)^{x-1}]\geq0$, that is, $g_{x}(t)$ is an
increasing function of $t$. Hence, when $x\in [0,1]$, $g_{x}(t)\geq g_{x}(1)$, i.e, $(1+t)^x\geq1+(2^x-1)t^x.$
\end{proof}
\indent When we denote the partition $\{P_1,P_2,P_3\}$ is a subset of the set $\{A,B_1,B_2,\cdots,B_{n-1}\}$, the equality (9) and (10) become
\begin{align}
C^2_{P_1|P_2P_3}=C^2_{P_1|P_2}+C^2_{P_1|P_3}\\
C_{P_1|P_2}=C_{P_1|P_2}^a.
\end{align} 
The equality (12) can be extended to 
\begin{align}
 C^{\alpha}_{P_1|P_2P_3}\ge C^{\alpha}_{P_1|P_2}+C^{\alpha}_{P_1|P_3},
\end{align} 
as 
\begin{align}
C^{\alpha}_{P_1|P_2P_3}= &(C^2_{P_1|P_2}+C^2_{P_1|P_3})^{\alpha/2}\nonumber\\
= & C^{\alpha}_{P_1|P_2}(1+\frac{C^2_{P_1|P_3}}{C^2_{P_1|P_2}})^{\alpha/2}\nonumber\\
\ge&C^{\alpha}_{P_1|P_2}+C^{\alpha}_{P_1|P_3}
\end{align}
when $\alpha\ge 2.$ The inequality in $(15)$ is due to $(1+\frac{C^2_{P_1|P_3}}{C^2_{P_1|P_2}})^{\alpha/2}\ge 1+\frac{C^2_{P_1|P_3}}{C^2_{P_1|P_2}}.$
Next we present the following theorem.
\begin{theorem}
	Assume $\ket{\psi}_{AB_1B_2\cdots B_{n-1}}$ is a generalized W class state, when we denote the partition $\{P_1,P_2,P_3\}$ is a subset of the set $\{A,B_1,B_2,\cdots,B_{n-1}\}$, then we have the following inequalities,
	\begin{align}
	(C^a_{P_1|P_2P_3})^{\beta}\ge (2^{\frac{\beta}{\alpha}}-1)\max{(C^a_{P_1|P_2})^{\beta},(C^a_{P_1|P_3})^{\beta}}) +\min({(C^a_{P_1|P_2})^{\beta},(C^a_{P_1|P_3})^{\beta}}), 
	\end{align}
	when $0\le \beta\le \alpha$ and $\alpha\ge 2.$
\end{theorem}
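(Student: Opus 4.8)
The plan is to deduce the concurrence-of-assistance statement from the ordinary concurrence by combining the two generalized W-class identities displayed above, namely $C^2_{P_1|P_2P_3}=C^2_{P_1|P_2}+C^2_{P_1|P_3}$ and $C_{P_1|P_j}=C^a_{P_1|P_j}$, together with the universal bound $C^a\ge C$ and, at the very end, the elementary inequality (11). The first thing I would establish is the auxiliary $\alpha$-power subadditivity
\begin{align}
(C^a_{P_1|P_2P_3})^{\alpha}\ge (C^a_{P_1|P_2})^{\alpha}+(C^a_{P_1|P_3})^{\alpha}.
\end{align}
Since $C^a$ is a maximum and $C$ a minimum over the same set of pure-state decompositions of the bipartite state $\rho_{P_1|P_2P_3}$, one has $C^a_{P_1|P_2P_3}\ge C_{P_1|P_2P_3}\ge 0$. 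Raising this to the power $\alpha$, substituting the squared-concurrence equality (12), and invoking the $\alpha$-power concurrence inequality (14) (valid for $\alpha\ge 2$) gives
\begin{align}
(C^a_{P_1|P_2P_3})^{\alpha}\ge (C^2_{P_1|P_2}+C^2_{P_1|P_3})^{\alpha/2}\ge C^{\alpha}_{P_1|P_2}+C^{\alpha}_{P_1|P_3},
\end{align}
after which the identity (13) turns the right-hand side into CoAs, yielding the auxiliary inequality.

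Next, because $\beta/\alpha\in[0,1]$ and $t\mapsto t^{\beta/\alpha}$ is increasing on $[0,\infty)$, I would raise the auxiliary inequality to the power $\beta/\alpha$, obtaining $(C^a_{P_1|P_2P_3})^{\beta}\ge\big[(C^a_{P_1|P_2})^{\alpha}+(C^a_{P_1|P_3})^{\alpha}\big]^{\beta/\alpha}$. Writing $a=\max\{C^a_{P_1|P_2},C^a_{P_1|P_3}\}$ and $b=\min\{C^a_{P_1|P_2},C^a_{P_1|P_3}\}$ and assuming $b>0$ (the case $b=0$ being immediate), I factor out $b^{\beta}$ and set $t=(a/b)^{\alpha}\ge 1$ and $x=\beta/\alpha\in[0,1]$. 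Inequality (11) then gives $(1+t)^{x}\ge 1+(2^{x}-1)t^{x}$, and since $b^{\beta}t^{x}=a^{\beta}$ the bound collapses to $(2^{\beta/\alpha}-1)a^{\beta}+b^{\beta}$, which is exactly the asserted inequality once $a$ and $b$ are reinterpreted as the maximum and minimum.

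The key conceptual point, and the step I expect to be the main obstacle, is the very first reduction: one must recognize that the target constant $2^{\beta/\alpha}$ forces the argument through the $\alpha$-th power concurrence inequality (14) rather than through the cheaper $\beta/2$-th power of the exact equality (12), which would only produce the weaker constant $2^{\beta/2}$ and would moreover require $\beta\le 2$. Routing through (14) is what allows $\beta$ to range up to $\alpha$. The remaining work is routine: checking the degenerate case $b=0$, confirming that $C^a\ge C$ holds for the genuinely mixed reduced state $\rho_{P_1 P_2 P_3}$, and verifying that the monotone substitution $t=(a/b)^{\alpha}$, $x=\beta/\alpha$ brings the expression into precisely the shape covered by (11).
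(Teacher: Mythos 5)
Your proof is correct and follows essentially the same route as the paper's: pass from CoA to concurrence, use the W-class identities (12)--(13) together with the $\alpha$-power inequality (14), and then apply Lemma 3.1 (inequality (11)) with $t=(a/b)^{\alpha}$ and $x=\beta/\alpha$ after factoring out the minimum, handling the degenerate case $\min=0$ separately exactly as the paper does. The only deviation is harmless (and slightly more robust): you invoke the trivial bound $C^a_{P_1|P_2P_3}\ge C_{P_1|P_2P_3}$ where the paper uses the W-class equality $C^a_{P_1|P_2P_3}=C_{P_1|P_2P_3}$ for the possibly mixed reduced state across that cut.
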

\begin{proof}
		 If $C^a_{P_1|P_2}\le C^a_{P_1|P_3},$ and $C^a_{P_1|P_2}= 0,$ the inequality is trivial, as $2^{\frac{\beta}{\alpha}}-1\le 1$ and $(C^a_{P_1|P_2P_3})\ge (C^a_{P_1|P_3})$.
		 \\
	 If $C_{P_1|P_2}\le C_{P_1|P_3},$ and $C_{P_1|P_2}\ne 0,$
we have that 
	\begin{align}
(C^a_{P_1|P_2P_3})^{\beta}=&(C_{P_1|P_2P_3})^{\beta}\nonumber\\
\ge &(C_{P_1|P_2}^{\frac{\beta}{\alpha}})^{\alpha}+(C_{P_1|P_3}^{\frac{\beta}{\alpha}})^{\alpha}\nonumber\\
= &C_{P_1|P_2}^{{\beta}}*(\frac{C_{P_1|P_3}^{\alpha}}{C_{P_1|P_2}^{\alpha}}+1)^{\frac{\beta}{\alpha}}\nonumber\\
\ge& C_{P_1|P_2}^{{\beta}}*(1+(2^{\frac{\beta}{\alpha}}-1)(\frac{C_{P_1|P_3}^{\alpha}}{C_{P_1|P_2}^{\alpha}})^{\frac{\beta}{\alpha}})\nonumber\\
=&C_{P_1|P_2}^{{\beta}}+(2^{\frac{\beta}{\alpha}}-1)C_{P_1|P_3}^{{\beta}}\nonumber\\
=&(C^a_{P_1|P_2})^{{\beta}}+(2^{\frac{\beta}{\alpha}}-1)(C^a_{P_1|P_3})^{{\beta}}
	\end{align}
here the first inequality is due to the inequality (14), the second inequality is due to the lemma 3.1.\\
\indent	Similarly, we can get the case when $C_{P_1|P_2}\ge C_{P_1|P_3}.$
\end{proof}
\indent Furthermore, the theorem 3.2 tells us a general monogamy inequality for the W class states in terms of CoA. And we have that the existing result (15) can be seen as a corollary of the inequality (16).\\
\indent Next we present an example to present our theorem.
\begin{example}
	Here let us take a 3-qubit generalized W state, $\ket{\psi}=\frac{1}{\sqrt{6}}(\ket{001}+\ket{010}+2\ket{100})$, through computation, we have $C_{A|B}=1/3,$ $C_{A|C}=2/3.$ then the equality (18) is 
	\begin{align}
	f(\beta,\alpha)=	&(C^a_{P_1|P_2P_3})^{\beta}- [(2^{\frac{\beta}{\alpha}}-1)\max({(C^a_{P_1|P_2})^{\beta},(C^a_{P_1|P_3})^{\beta}}) +\min({(C^a_{P_1|P_2})^{\beta},(C^a_{P_1|P_3})^{\beta}})]\nonumber\\
		=&(\frac{\sqrt{5}}{3})^{\beta}-(2^{\frac{\beta}{\alpha}}-1)*(\frac{2}{3})^{\beta}-(\frac{1}{3})^{\beta}\ge 0
	\end{align}
	From (20), when $\alpha=2,$ the function $f(\beta,2)\ge f(\beta,\alpha)$, $\alpha>2.$ And when $\alpha=2,$ through rough computation, the function $f(\beta,2)$ is decreasing. As the figure tells us, the formula is bigger than 0.
\end{example}
	\begin{figure}
	\centering
	\includegraphics[width=100mm]{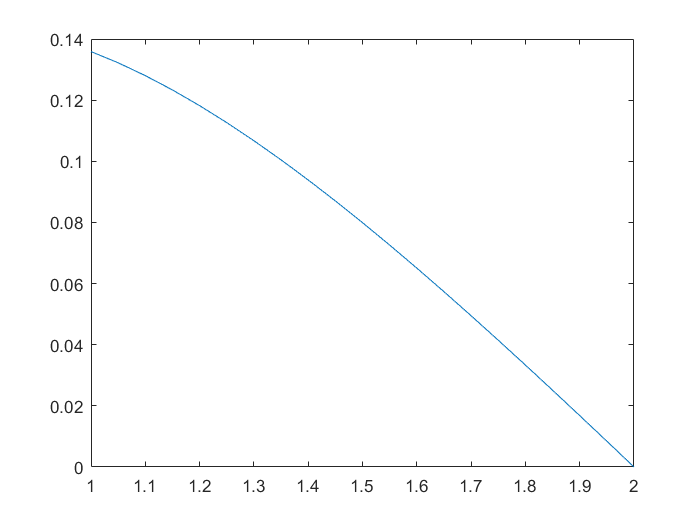}\\
	\caption{In this figure, we plot the function of $f(\beta,2)$. As when $\alpha=2$, the function $f(\beta)_{\alpha}$ is optimal. }\label{}
\end{figure}
\indent We can generalize the theorem 3.2 to the multipartite generalized W class states.\\
\begin{theorem}
Let $\rho_{P_1,\cdots,P_m}$ is a reduced density matrix of a generalized W class state $\ket{\psi}_{AB_1\cdots B_{n-1}},$ when $C^a_{P_1P_i}\le C^a_{P_1P_{i+1}\cdots P_{m-1}}, i=1,2,\cdots,n-1$ and $C^a_{P_1P_{j}}\ge C^a_{P_1|P_{j+1}\cdots P_{m}},j=n,\cdots,m-1,$ then we have 
\begin{align}
(C^a_{P_1|P_2\cdots P_m})^{\beta}\ge& \sum_{i=2}^nh^{i-1}(C^a_{P_1|P_i})^{\beta}+h^{n}\sum_{i=n+1}^{m-1}(C^a_{P_1|P_i})^{\beta}+h^m(C^a_{P_1|P_{m}})^{\beta}.
\end{align}
where $\beta\in[0,\alpha],$ $\alpha\ge 2$. And we denote that $h=2^{\frac{\beta}{\alpha}}-1.$
\end{theorem}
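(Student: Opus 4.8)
The plan is to obtain the multipartite bound by iterating the tripartite inequality of Theorem 3.2, treating the tail of the partition at each stage as a single composite party. The point that makes this legitimate is Lemma 2.1: for a generalized W class state the identities $C_{P_sP_k}=C^a_{P_sP_k}$ and $C^2_{P_1|\cdots}=\sum_{k}\big(C^a_{P_1P_k}\big)^2$ hold for \emph{every} partition of the subsystem set. Hence coarse-graining $P_{i+1},\dots,P_m$ into one block $Q=P_{i+1}\cdots P_m$ produces another admissible partition $\{P_1,P_i,Q\}$, and Theorem 3.2 applies to the split $P_1\mid P_i\mid Q$ verbatim. Writing $T_i:=C^a_{P_1|P_iP_{i+1}\cdots P_m}$ for the tail beginning at index $i$ (so $T_2$ is the left-hand side and $T_m=C^a_{P_1|P_m}$), Theorem 3.2 gives, with $h=2^{\beta/\alpha}-1$,
\[
T_i^\beta\ \ge\ h\,\max\!\big((C^a_{P_1|P_i})^\beta,\,T_{i+1}^\beta\big)+\min\!\big((C^a_{P_1|P_i})^\beta,\,T_{i+1}^\beta\big);
\]
note that $\beta\in[0,\alpha]$ and $\alpha\ge2$ force $h\in[0,1]$, the range in which both Theorem 3.2 and Lemma 3.1 are valid.

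I would then split the recursion into the two regimes fixed by the hypotheses. On the first block of indices, where $C^a_{P_1|P_i}\le T_{i+1}$, the tail is the maximum and the building block specializes to $T_i^\beta\ge (C^a_{P_1|P_i})^\beta+h\,T_{i+1}^\beta$; feeding this into itself telescopes down the block and attaches geometrically increasing powers of $h$ to the successive single-party terms $(C^a_{P_1|P_i})^\beta$, while a leading power of $h$ is carried onto the remaining tail. On the second block, where $C^a_{P_1|P_j}\ge T_{j+1}$, the single-party term is the maximum and the block reads $T_j^\beta\ge h\,(C^a_{P_1|P_j})^\beta+T_{j+1}^\beta$; telescoping this attaches the \emph{same} extra factor $h$ to each such term, which, after multiplication by the power of $h$ inherited from the first block, is exactly what produces the uniform coefficient $h^{n}$ on the middle sum. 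Chaining the two telescopes and terminating at $T_m=C^a_{P_1|P_m}$ assembles the three groups of the asserted right-hand side.

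The last step is to reconcile the exponents. On the first block and on the terminal term $(C^a_{P_1|P_m})^\beta$, the raw telescoping yields powers of $h$ whose exponents are smaller than those in the statement, hence, since $h\in[0,1]$, larger coefficients; because $h^{p}\ge h^{q}$ whenever $p\le q$, each such coefficient may be relaxed \emph{downward} to the slightly smaller power claimed in the statement without destroying the lower bound, which yields the asserted inequality exactly. I expect the genuine difficulty to lie not in any individual estimate but in the bookkeeping of the $h$-exponents across the regime switch at index $n$: one must track how the power accumulated over the increasing first block propagates into the uniform second block and onto the final term, and must verify at every peeling step that the coarse-grained partition still meets the hypotheses of Lemma 2.1 and Theorem 3.2 so that the tripartite bound can be reapplied.
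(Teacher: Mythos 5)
Your proposal is correct and follows essentially the same route as the paper's own proof: iterate the tripartite bound of Theorem 3.2 on the splits $P_1\mid P_i\mid P_{i+1}\cdots P_m$, telescope through the two regimes fixed by the hypotheses (tail-dominant indices giving $T_i^\beta\ge (C^a_{P_1|P_i})^\beta+h\,T_{i+1}^\beta$, then single-party-dominant indices giving $T_j^\beta\ge h\,(C^a_{P_1|P_j})^\beta+T_{j+1}^\beta$), and finally relax the raw exponents ($h^{i-2}$ on the first block, $h^{n-1}$ on the terminal term) down to the stated $h^{i-1}$ and $h^m$ using $h\in[0,1]$. Your treatment is in fact slightly more careful than the paper's, which records this last weakening as an equality and leaves the coarse-graining justification via Lemma 2.1 implicit.
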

\begin{proof}
	From the theorem 3.2, we have that 
	\begin{align}
(C^a_{P_1|P_2\cdots P_m})^{\beta}\nonumber\\
	\ge &(C^a_{P_1|P_2})^{\beta}+h (C^{a}_{P_1|P_3\cdots P_{m}})^{\beta} \\
	\ge & \cdots \\
	\ge & \sum_{i=2}^n h^{i-2} (C^a_{P_1|P_i})^{\beta}+ h^{n-1} (C^{a}_{P_1|P_{n+1}\cdots P_{m}})^{\beta} \\
	\ge &\sum_{i=2}^n h^{i-2} (C^a_{P_1|P_i})^{\beta}+ h^{n-1}(h(C^{a}_{P_1|P_{n+1}})^{\beta}+(C^a_{P_1|P_{n+2}\cdots P_{m}})^{\beta}) \\
	\cdots \\
\ge &\sum_{i=2}^{n}h^{i-2} (C^{a}_{P_1|P_i})^{\beta}+h^{n}\sum_{i=n+1}^{m-1}(C^a_{P_1|P_{i}})^{\beta}+h^m(C^a_{P_1|P_{m} })^{\beta}.\nonumber\\
=&\sum_{i=2}^nh^{i-1}(C^a_{P_1|P_i})^{\beta}+h^{n}\sum_{i=n+1}^{m-1}(C^a_{P_1|P_i})^a+h^m(C^a_{P_1|P_{m}})^{\beta}.
	\end{align}
	where the inequalities (20)-(22) is due to $C_{P_1P_i}\le C_{P_1P_{i+1}\cdot P_{m-1}}, i=1,2,\cdots,n-1$ , and the inequalities (23)-(25) is due to $C_{P_1P_{j}}\ge C_{P_1|P_{j+1}\cdots P_{m}},j=n,\cdots,m-1.$
\end{proof}
\indent Similar to the analysis of the theorem 3.2, we present all the powers of the generalized W class in terms of CoA under some restricted conditions.
\section{monogamy inequalities for the negativity}
\indent In this section, first let us recall the definition of negativity. For a bipartite pure state $\ket{\psi}_{AB}=\sum_{i=0}^{d-1} \sqrt{\lambda_i}\ket{ii},$ its negativity is defined as 
\begin{align}
N(\ket{\psi}_{A|B})=||\ket{\psi}_{A|B}\bra{\psi}||_1-1=\sum_{i<j}2\sqrt{\lambda_i\lambda_j},
\end{align}
where $$\ket{\psi}\bra{\psi}^{T_B}=\sum_{i,j=0}^{d-1}\sqrt{\lambda_i\lambda_j}\ket{ij}_{AB}\bra{ji}$$ is the partial trace of $\ket{\psi}_{AB}$, and $||\cdot||_1$ is the 1-norm.\\
\indent For a mixed state $\rho_{AB},$ here we denote its negativity as CREN by the convex roof extended method as  \cite{LOK}
\begin{align}
\mathcal{N}(\rho_{AB})=\min_{\{p_i,|\psi_i\rangle\}} \sum_i p_i N(|\psi_i\rangle).
\end{align}
where the minimization takes over all the decompositions $\{p_i,|\psi_i\rangle\}$ of $\rho=\sum_{i}p_{i}|\psi _{i}\rangle \langle \psi _{i}|$ with $p_{i}\geq 0$, $\sum_{i}p_{i}=1$.\\
\indent In \cite{KDS, JK}, the authors showed that the generalized W class states $\ket{\psi}_{AB_1B_2...B_{n-1}}$ satisfy the monogamy relations in terms of the negativity:
\begin{align}
\mathcal{N}^{2}_{A|B_1B_2...B_{n-1}}= \mathcal{N}^{2}_{AB_1}+...+\mathcal{N}^{2}_{AB_{n-1}},
\end{align}
where we assume that $\rho_{AB_i}=\tr_{B_1B_2\cdots B_{i-1}B_{i+1}\cdots B_{n-1}}\ket{\psi}\bra{\psi},$ $i=1,2,\cdots,n-1.$ Similar to the analysis of the inequality (15), the inequality (28) can be generalized in terms of the $\alpha$-th power of the CREN,
\begin{align}
\mathcal{N}^{\alpha}_{A|B_1B_2\cdots B_{n-1}}\ge \mathcal{N}^{\alpha}_{A|B_1}+\cdots+\mathcal{N}^{\alpha}_{A|B_{n-1}},
\end{align} 
where we assume that $\alpha\ge 2.$\\
\indent By the similar proof of the theorem 3.2 and theorem 3.4, we have the following theorems.
\begin{theorem}
	Assume $\ket{\psi}_{AB_1B_2}$ is a generalized W class state, then we have the following inequalities,
	\begin{align}
	(\mathcal{N}_{A|B_1B_2})^{\beta}\ge (2^{\frac{\beta}{\alpha}}-1)\max({(\mathcal{N}_{A|B_1})^{\beta},(\mathcal{N}_{A|B_2})^{\beta}}) +\min({(\mathcal{N}_{A|B_1})^{\beta},(\mathcal{N}_{A|B_2})^{\beta}}), 
	\end{align}
	when $0\le \beta\le \alpha$ and $\alpha\ge 2.$ 
\end{theorem}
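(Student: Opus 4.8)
The plan is to transcribe the argument of Theorem 3.2 into the negativity setting, the two analytic ingredients being the squared-CREN equality (28) for the generalized $W$ class, $\mathcal{N}^{2}_{A|B_1B_2}=\mathcal{N}^{2}_{A|B_1}+\mathcal{N}^{2}_{A|B_2}$, together with its $\alpha$-th power extension (29), $\mathcal{N}^{\alpha}_{A|B_1B_2}\ge\mathcal{N}^{\alpha}_{A|B_1}+\mathcal{N}^{\alpha}_{A|B_2}$ for $\alpha\ge2$, and Lemma 3.1. By the symmetry of the claimed inequality in $B_1$ and $B_2$ I may assume without loss of generality that $\mathcal{N}_{A|B_1}\le\mathcal{N}_{A|B_2}$, so that the minimum is $(\mathcal{N}_{A|B_1})^{\beta}$ and the maximum is $(\mathcal{N}_{A|B_2})^{\beta}$; it then suffices to prove $(\mathcal{N}_{A|B_1B_2})^{\beta}\ge(\mathcal{N}_{A|B_1})^{\beta}+(2^{\beta/\alpha}-1)(\mathcal{N}_{A|B_2})^{\beta}$.

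First I would dispose of the degenerate case $\mathcal{N}_{A|B_1}=0$. Here the equality (28) gives $\mathcal{N}_{A|B_1B_2}=\mathcal{N}_{A|B_2}$, and since $\beta/\alpha\le1$ forces $2^{\beta/\alpha}-1\le1$, the right-hand side reduces to $(2^{\beta/\alpha}-1)(\mathcal{N}_{A|B_2})^{\beta}\le(\mathcal{N}_{A|B_2})^{\beta}=(\mathcal{N}_{A|B_1B_2})^{\beta}$, so the inequality is immediate.

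In the main case $\mathcal{N}_{A|B_1}\neq0$ I would run the same chain as in (17). Writing $(\mathcal{N}_{A|B_1B_2})^{\beta}=\big((\mathcal{N}_{A|B_1B_2})^{\alpha}\big)^{\beta/\alpha}$ and applying (29) inside the outer exponent gives $(\mathcal{N}_{A|B_1B_2})^{\beta}\ge(\mathcal{N}_{A|B_1}^{\alpha}+\mathcal{N}_{A|B_2}^{\alpha})^{\beta/\alpha}$. Factoring out $\mathcal{N}_{A|B_1}^{\alpha}$ and setting $t=\mathcal{N}_{A|B_2}^{\alpha}/\mathcal{N}_{A|B_1}^{\alpha}$ rewrites the bound as $\mathcal{N}_{A|B_1}^{\beta}(1+t)^{\beta/\alpha}$; since $\mathcal{N}_{A|B_2}\ge\mathcal{N}_{A|B_1}$ I have $t\ge1$, so Lemma 3.1 with $x=\beta/\alpha\in[0,1]$ yields $(1+t)^{\beta/\alpha}\ge1+(2^{\beta/\alpha}-1)\,t^{\beta/\alpha}$. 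Multiplying back through and using $t^{\beta/\alpha}=\mathcal{N}_{A|B_2}^{\beta}/\mathcal{N}_{A|B_1}^{\beta}$ collapses the expression to $\mathcal{N}_{A|B_1}^{\beta}+(2^{\beta/\alpha}-1)\mathcal{N}_{A|B_2}^{\beta}$, which is exactly the desired bound.

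I do not expect a genuine obstacle: the whole argument is the concurrence proof with $C$ replaced by $\mathcal{N}$, and the only place the $W$-class hypothesis enters is through (28)–(29), which are already in hand. The two points needing care are purely bookkeeping, namely checking that the ordering $\mathcal{N}_{A|B_1}\le\mathcal{N}_{A|B_2}$ indeed gives $t\ge1$ so that Lemma 3.1 is applicable, and isolating the vanishing-term case beforehand so that the factorization by $\mathcal{N}_{A|B_1}^{\alpha}$ is legitimate.
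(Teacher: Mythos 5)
Your proposal is correct and is essentially the paper's own argument: the paper proves this theorem simply by asserting it follows ``by the similar proof of the theorem 3.2,'' and your write-up is precisely that transcription, replacing the concurrence/CoA equalities with the CREN relations (28)--(29) and invoking Lemma 3.1 with $x=\beta/\alpha$, $t=\mathcal{N}^{\alpha}_{A|B_2}/\mathcal{N}^{\alpha}_{A|B_1}\ge 1$. If anything, your handling of the degenerate case $\mathcal{N}_{A|B_1}=0$ via the exact equality (28) is slightly cleaner than the paper's corresponding appeal to monotonicity in Theorem 3.2.
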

\begin{theorem}
	Assume $\ket{\psi}_{AB_1\cdots B_{n-1}}$ is a generalized W class state, when $\mathcal{N}_{AB_i}\le \mathcal{N}_{AB_{i+1}\cdots B_{n-1}},$ $i=1,2,\cdots,m$ and $\mathcal{N}_{AB_{j}}\ge \mathcal{N}_{A|B_{n+1}\cdots B_{n-1}},j=m,\cdots,n-2,$ then we have 
	\begin{align}
	\mathcal{N}^{\beta}_{A|B_1\cdots B_{n-1}}\ge& \sum_{i=1}^mh^{i-1}\mathcal{N}^{\beta}_{A|B_i}+h^{m+1}\sum_{i=m+1}^{n-2}\mathcal{N}^{\beta}_{A|B_i}+h^m\mathcal{N}^{\beta}_{A|B_{n-1}}
	\end{align}
	where $\beta\in[0,\alpha],$ $\alpha\ge 2$. And we denote that $h=2^{\frac{\beta}{\alpha}}-1.$
\end{theorem}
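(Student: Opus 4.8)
The plan is to reduce the multipartite statement to repeated application of the two-party base case, Theorem 4.1, exactly in the way Theorem 3.4 is obtained by iterating Theorem 3.2. First I would record the two directional forms of the base inequality: whenever $\mathcal{N}_{A|X}\le\mathcal{N}_{A|Y}$ one has $\mathcal{N}^{\beta}_{A|XY}\ge \mathcal{N}^{\beta}_{A|X}+h\,\mathcal{N}^{\beta}_{A|Y}$ (the min sits on the peeled-off term, the max on the tail), and symmetrically $\mathcal{N}^{\beta}_{A|XY}\ge h\,\mathcal{N}^{\beta}_{A|X}+\mathcal{N}^{\beta}_{A|Y}$ when $\mathcal{N}_{A|X}\ge\mathcal{N}_{A|Y}$, with $h=2^{\beta/\alpha}-1$. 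These follow from Theorem 4.1 once one checks that Theorem 4.1 applies with $B_1$ replaced by a single block and $B_2$ by a grouped tail block; the justification is that the generalized W class retains its structure under grouping of subsystems, so the negativity monogamy equality (28) and its $\alpha$-power consequence (29) remain valid for the grouped bipartition, which is precisely the input Theorem 4.1 rests on.

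With the base inequality in hand, the argument is a telescoping descent on the tail. In the first regime, where $\mathcal{N}_{AB_i}\le \mathcal{N}_{A|B_{i+1}\cdots B_{n-1}}$ for $i=1,\dots,m$, I would apply the min-form repeatedly: peeling off $B_1$ yields $\mathcal{N}^{\beta}_{A|B_1}+h\,\mathcal{N}^{\beta}_{A|B_2\cdots B_{n-1}}$, peeling off $B_2$ from the tail multiplies a further factor $h$ onto the remaining tail, and so on, so that after $m$ steps one accumulates $\sum_{i=1}^{m}h^{i-1}\mathcal{N}^{\beta}_{A|B_i}$ together with the residual term $h^{m}\,\mathcal{N}^{\beta}_{A|B_{m+1}\cdots B_{n-1}}$. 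I would present this as the chain of inequalities (20)--(22) does in the proof of Theorem 3.4, making the coefficient $h^{i-1}$ on $\mathcal{N}^{\beta}_{A|B_i}$ explicit at each stage.

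Then, on the residual block, I switch to the second regime, where $\mathcal{N}_{AB_j}\ge \mathcal{N}_{A|B_{j+1}\cdots B_{n-1}}$, and apply the max-form. Now each peeled-off factor $B_j$ ($j=m+1,\dots,n-2$) carries the extra factor $h$, so relative to the running coefficient $h^{m}$ it contributes $h^{m+1}\mathcal{N}^{\beta}_{A|B_j}$, while the shrinking tail keeps coefficient $h^{m}$ until the last term $\mathcal{N}^{\beta}_{A|B_{n-1}}$ emerges with coefficient $h^{m}$; this reproduces the stated right-hand side. The routine part is the bookkeeping of the powers of $h$, and I expect the main obstacle to be the careful handling of the transition at index $m$ between the two regimes, together with cleaning up the index conventions in the hypotheses (the overlap at $i=j=m$ and the evident typo $B_{n+1}$ for $B_{j+1}$), so that the telescoping is well defined and the exponents of $h$ assemble to $h^{i-1}$, $h^{m+1}$, and $h^{m}$ in the three groups of terms.
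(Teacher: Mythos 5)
Your proposal is correct and follows essentially the same route as the paper: the paper proves this theorem simply by declaring it ``similar to Theorems 3.2 and 3.4,'' i.e.\ by iterating the bipartite base case (Theorem 4.1, in its min-form and max-form) along the two ordered regimes exactly as in the telescoping chain of Theorem 3.4, which is precisely what you spell out. Your bookkeeping of the powers of $h$ ($h^{i-1}$, then $h^{m+1}$ on the peeled middle terms, with $h^{m}$ surviving on the final term) reproduces the stated right-hand side correctly, and your reading of $B_{n+1}$ as a typo for $B_{j+1}$ matches the intended hypotheses.
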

\section{Conclusion}
\indent Monogamy of entanglement is a fundamental property of multipartite entanglement theory. Due to the importance on the study of the generalized W class states, here we mainly present the monogamy inequalities in terms of the $\beta$-th power of CoA and the $\beta$-th power of CREN ($\beta\in (0,2)$) for the generalized W class. Due to the importance of the study on monogamy of entanglement, our result can provide a rich reference on the study of multipartite entanglement theory for future work.


\begin{thebibliography}{00}
\bibitem{T} B. M. Terhal. IBM J. Res. Dev. \textbf{48}, 71 (2004).


\bibitem{JM} M. Tomamichel, S. Fehr, J. Kaniewski, S. Wehner. New J. Phys. \textbf{15}, 103002 (2013).
\bibitem{CKW} V. Coffman, J. Kundu and W. K. Wootters, Phys. Rev. A \textbf{61}, 052306 (2000).
\bibitem{OV} T. J. Osborne and F. Verstraete, Phys. Rev. Lett. \textbf{96}, 220503 (2006).
\bibitem{OF} Y. C. Ou and H. Fan, Phys. Rev. A \textbf{75}, 062308 (2007).
\bibitem{HG} H. He and G. Vidal, Phys. Rev. A \textbf{91}, 012339 (2015).
\bibitem{LL}Y. Luo and Y. M. Li, Ann. Phys. (NY) \textbf{362}, 511 (2015).
\bibitem{JF}Z. X. Jin, J. Li, T. Li and S. M. Fei, Physical Review A 97, 032336 (2018)
\bibitem{ZF}X. N. Zhu and S. M. Fei. Quantum Information Processing  \textbf{18} ,23 (2019).
\bibitem{GBS} G. Gour, S. Bandyopadhay and B. C. Sanders, J. Math.
Phys. \textbf{48}, 012108 (2007)
\bibitem{L} Y. Luo. arXiv:1703.01935.
\bibitem{O} Y. C. Ou, Phys. Rev. A \textbf{75}, 034305 (2007).
\bibitem{JS} J. S. Kim and B. C. Sanders, J. Phys. A \textbf{41}, 495301
(2008).
\bibitem{KDS} J. S. Kim, A. Das and B. C. Sanders,  Phys. Rev. A \textbf{79}, 012329 (2009)
\bibitem{JK} J. H. Choi, J. S. Kim, Phys. Rev. A \textbf{92}, 042307 (2015).
\bibitem{ZNF} X. N. Zhu and S. M. Fei. Physical Review A \textbf{90}, 024304 (2014)


\bibitem{LOK} S. Lee, D. P. Chi, S. D. Oh, and J. S. Kim, Phys. Rev. A \textbf{68}, 062304 (2003).




\end{thebibliography}
\end{document}